\setlist[enumerate,1]{label = \arabic*.,ref = \arabic*}
\def\supp{\textnormal{supp}}
\theoremstyle{definition}
\newtheorem{definition}{Definition}
\newtheorem{theorem}[definition]{Theorem}
\newtheorem{corollary}[definition]{Corollary}
\newtheorem{proposition}[definition]{Proposition}
\newtheorem{example}[definition]{Example}
\newtheorem{remark}[definition]{Remark}
\begin{document}
\title{Bounds on the Privacy Amplification of Arbitrary Channels via the Contraction of $f_\alpha$-Divergence} 

 \author{%
   \IEEEauthorblockN{Leonhard Grosse\IEEEauthorrefmark{1},
                     Sara Saeidian\IEEEauthorrefmark{1}\IEEEauthorrefmark{2},
                     Tobias J. Oechtering\IEEEauthorrefmark{1},
                      and Mikael Skoglund\IEEEauthorrefmark{1}%
   \IEEEauthorblockA{\IEEEauthorrefmark{1}%
                    KTH Royal Institute of Technology, Stockholm, Sweden,
                     \{lgrosse, saeidian, oech, skoglund\}@kth.se}
       \IEEEauthorblockA{\IEEEauthorrefmark{2}%
                    Inria Saclay, Palaiseau, France}}

\thanks{This work was supported by the Swedish Research Council (VR) under grants 2023-04787 and 2024-06615.} 
}

\maketitle

\begin{abstract}
We examine the privacy amplification of channels that do not necessarily satisfy any LDP guarantee by analyzing their contraction behavior in terms of $f_\alpha$-divergence, an $f$-divergence related to Rényi-divergence via a monotonic transformation. We present bounds on contraction for restricted sets of prior distributions via $f$-divergence inequalities and present an improved Pinsker's inequality for $f_\alpha$-divergence based on the joint range technique by \citet{5773031}. The presented bound is tight whenever the value of the total variation distance is larger than $\nicefrac{1}{\alpha}$. By applying these inequalities in a cross-channel setting, we arrive at strong data processing inequalities for $f_\alpha$-divergence that can be adapted to use-case specific restrictions of input distributions and channel. The application of these results to privacy amplification shows that even very sparse channels can lead to significant privacy amplification when used as a post-processing step after local differentially private mechanisms.
\end{abstract}


\section{Introduction}

An important property of \textit{local differential privacy} (LDP) is its invariance to post-processing. Informally, the post-processing property states that whatever further processing is applied to the output of a system satisfying LDP will not weaken the overall system's privacy guarantee. Mathematically, this property can be seen as a form of \emph{data processing inequality} (DPI). As an extension of this observation, it is natural to ask about the \emph{privacy amplification} of a post-processing operation. Basic results state that if a post-processing operation satisfying $\varepsilon_2$-LDP is applied to a system satisfying $\varepsilon_1$-LDP, the composition of both systems will satisfy \emph{at least} $\min\{\varepsilon_1,\varepsilon_2\}$-LDP. However, it is generally not know when \emph{arbitrary} post-processing operations, that is, operations that themselves do \emph{not} satisfy any LDP guarantee, may improve the overall systems LDP guarantee. We take steps to answer this question in this paper. In the context of data processing inequalities, this relates to the investigation of \emph{strong} data processing inequalities (SDPIs), or the \emph{contraction} of a channel with respect to a suitable measure of divergence (see \cite[Chapter 33]{Polyanskiy_Wu_2025}). For pure LDP, the divergence measure of interest is the Rényi-divergence \cite{renyi1961entropy,van2014renyi} of order $\infty$, since
\begin{equation}
    \max_y\log \frac{P_{X|W=w}(y)}{P_{X|W=w'}(y)} = \lim_{\alpha \to \infty} D_\alpha(P_{X|W=w}||P_{X|W=w'}).
\end{equation}
\citet{gilani2024unifying} suggest the notion of $\alpha$-Rényi-LDP as a relaxation of pure LDP, which bounds the maximum Rényi-divergence between the induced output distributions for some arbitrary order $\alpha$. With this framework at hand, studying the privacy amplification of a channel becomes equivalent to studying SDPIs for Rényi-divergences. While SDPIs have been extensively studied for \emph{$f$-divergences}, results on Rényi-divergences are so far less explored. Although Rényi-divergence is not an $f$-divergence, it can be expressed as a monotonic transformation of a corresponding \emph{$f_\alpha$-divergence} \cite{7552457}, a family of $f$-divergences parameterized by $\alpha$. This connection allows us to translate results on $f$-divergences to Rényi-divergences—results that are often easier to obtain.

\subsection{Contributions}

In this work, we derive several technical tools for studying the privacy amplification of arbitrary discrete channels—those not necessarily satisfying any LDP guarantee—by analyzing how they affect the Rényi-LDP of composed systems. Our key contributions are:

\begin{itemize}
    \item We derive a general condition under which channels fail to contract Rényi-divergence, establishing structural limits on the contraction of a channel in Section \ref{sec:generalcond}.
    
    \item We develop new tools for bounding the $f_\alpha$-divergence in terms of total variation distance in Section \ref{sec:pinsker}. Specifically, we provide:
    \begin{itemize}
        \item an improved and often tight version of Pinsker’s inequality for $f_\alpha$-divergence (for $\alpha > 1$);
        \item a simplification of the recently presented proof of Binette's inequality \cite{8630660} in \cite{Hirche:2023caq} based on $E_\gamma$-divergence.
    \end{itemize}
    
    \item Using these tools, we analyze the contraction of $f_\alpha$-divergence and show how input distribution constraints can yield meaningful bounds in Section \ref{sec:sdpi}.
    
    \item Finally, in Section \ref{sec:RLDPex}, we apply these bounds to the Rényi-LDP amplification of arbitrary post-processing channels. Our results show that many channels---despite not satisfying \emph{any} (finite) LDP guarantee---can result in significant privacy amplification if applied to the output of a mechanism satisfying LDP. We illustrate these findings by applying the proposed bounds to randomized response mechanisms post-processed by channels with sparse channel matrices (i.e., matrices with many zeros).
\end{itemize}

\subsection{Related Work}

The contraction behavior of $f$-divergences has been extensively studied; see, e.g., \cite{Polyanskiy_Wu_2025, raginsky2016strong, 10.1134/S0032946020020015, 10206578, anantharam2013maximal, ahlswede1976spreading}. For Rényi-divergences, \citet{polyanskiy2015dissipation} provide upper bounds in the Gaussian setting using the Hockeystick-divergence $E_\gamma$. Recently, \citet{10619367} show that the standard total-variation-based upper bound fails for Rényi-divergence, though the universal $\chi^2$-based lower bound remains valid.
In the context of privacy, \citet{10206578} show that if a mechanism satisfies local differential privacy, strong data processing inequalities can be derived in terms of a channel's LDP guarantee. These bounds are further applied to mixing times in \cite{zamanlooy2024mathrm}. Related, it is shown in \cite{9517999} that the LDP guarantee of a mechanism is closely related to its $f$-divergence contraction properties. The $f_\alpha$-divergence framework enables the use of $f$-divergence inequalities in the Rényi setting, as developed in \cite{7552457} and subsequent works, including an applications to SDPIs in \cite{sason2019data}. The privacy amplification of various mechanisms in the central RDP setting is studied in the line of work \cite{balle2018privacy,balle2019privacy,balle2020privacy}.

\section{Preliminaries}
We use uppercase letters to denote random variables, lowercase letters to denote their realizations and caligraphic letters to denote sets. Specifically, we consider the two random variables $X$ and $Y$, where $X$ represents input data into a channel $P_{Y|X}$ (also referred to as \emph{mechanism} if it satisfies an RLDP guarnatee in the following) that induces a random variable $Y$ at its output. Let $P_{XY}$ denote the joint distribution of $X$ and $Y$. Then we use $P_{XY} = P_{Y|X} \times P_X$ to imply that $P_{XY}(x,y) = P_{Y|X=x}(y)P_X(x)$ and $P_{Y} = P_{Y|X} \circ P_X$ to denote the marginalization $P_Y(y) = \sum_{x \in \mathcal X} P_{Y|X=x}(y)P_X(x)$. We write $\mathcal P_{\mathcal X}$ for the probability simplex on the set $\mathcal X$. For $N\in\mathbb N$, we define $[N] \coloneqq \{1,\dots,N\}$ as the set of positive integers up to $N$. Finally, $\log(\cdot)$ denotes the natural logarithm.

\subsection{Rényi-divergence, $f_\alpha$-divergene and $E_\gamma$-divergence}
For any convex function $f: (0,\infty) \to \mathbb R$ such that $f(1)=0$, define the \emph{$f$-divergence} between two probability measures $P \ll Q$ \cite{csiszar1967information} as
\begin{equation}
    D_f(P||Q) = \mathbb E_Q\bigg[f\bigg(\frac{dP}{dQ}\bigg)\bigg].
\end{equation}
 The \emph{Rényi-divergence} \cite{renyi1961entropy} $D_\alpha(P||Q)$ between $P$ and $Q$ is a monotone transform of the \emph{$f_\alpha$-divergence} \cite{Polyanskiy_Wu_2025}:

\begin{definition}[$f_\alpha$-divergence]
    For any $\alpha > 0$, the \emph{$f_\alpha$-divergence} is defined as the $f$-divergence $D_{f_\alpha}(P||Q)$, where
    \begin{equation}
        f_\alpha(x) = \begin{cases}
            1-x^{\alpha}, \text{ if } \alpha < 1,\\
            x\log x, \text{ if } \alpha = 1, \\
            x^{\alpha}-1, \text{ if } \alpha > 1.
        \end{cases}
    \end{equation}
The \emph{Rényi-divergence of order $\alpha$} is then obtained by
\begin{equation}
\label{eq:falphatoalpha}
        D_\alpha(P||Q) = \frac{1}{\alpha-1} \log\biggl\{D_{f_\alpha}(P||Q) + 1\biggr\}.
\end{equation}
\end{definition}

 We will use the \emph{Hockeystick-divergence} $E_{\gamma}$ (from here on simply called \emph{$E_\gamma$-divergence}) as a tool for obtaining bounds on the Rényi-divergence, as similarly done in \cite[Section 3.2]{polyanskiy2015dissipation}.
\begin{definition}[$E_{\gamma}$-divergence, see, e.g., {{\cite{7552457}}}]
    Given some parameter $\gamma > 0$, the \emph{$E_{\gamma}$-divergence} between two discrete distributions $P$ and $Q$ is defined as
    \begin{equation}
        E_\gamma(P||Q) = \frac{1}{2} \sum_{x \in \mathcal X} |P(x) - \gamma Q(x)| - \frac{1}{2}|1-\gamma|.
    \end{equation}
\end{definition}
For twice differentiable $f$, the relation shown in \cite{cohen1998comparisons},
\begin{equation}
\label{eq:fdivisintegraloverEgamma}
    D_f(P||Q) = \int_0^{\infty}E_{\gamma}(P||Q)f''(\gamma)d\gamma,
\end{equation}
can then be used to obtain any such $f$-divergence. 
For $\gamma = 1$, the $E_\gamma$-divergence is equal to the total variation distance between the two distributions, i.e., $E_1(P||Q) = \text{TV}(P||Q)$.

\subsection{Strong Data Processing Inequalities}
All $f$-divergences (including the $f_\alpha$-divergence)---and hence also the Rényi-divergence---satisfy the \emph{data processing inequality} (DPI). The DPI states that for any two probability measures $P_X, Q_X$ on $\mathcal X$, and some transition kernel $P_{Y|X}$,
\begin{equation}
\label{eq:DPI}
    D_f(P_{Y|X} \circ P_X||P_{Y|X} \circ Q_X) \leq D_f (P_X || Q_X).
\end{equation}We define the contraction coefficient $\eta_{\alpha}(P_{Y|X},\mathcal P)$ of the $\alpha$-Rényi-divergence given a channel $P_{Y|X}$ as the maximum factor between the two sides of the above inequality, that is,
\begin{align}
    &\eta_\alpha(P_X,P_{Y|X},\mathcal P) \\\coloneqq &\sup_{Q_X \in \mathcal P: D_\alpha (P_X||Q_X) \neq 0} \frac{D_\alpha(P_{Y|X} \circ P_X||P_{Y|X} \circ Q_X)}{D_\alpha (P_X || Q_X)},
\end{align}
\begin{equation}
\label{eq:contractioncoeffdef}
    \eta_\alpha (P_{Y|X},\mathcal P) \coloneqq \sup_{P_X \in \mathcal P} \eta_\alpha (P_X,P_{Y|X},\mathcal P).
\end{equation}
Here, $\mathcal P \subseteq \mathcal P_{\mathcal X}$ denotes a subset of the probability simplex on the alphabet $\mathcal X$. In this framework, the DPI states that $\eta_\alpha(P_{Y|X},\mathcal P_{\mathcal X}) \leq 1$. We are then interested in the conditions for which $\eta_{\alpha}(P_{Y|X},\mathcal P) < 1$, that is, cases in which the inequality in \eqref{eq:DPI} is \emph{strict}. In this case, we say a \emph{strong data processing inequality} (SDPI) holds. Similarly to the above, we define $\eta_{\text{TV}}(P_{Y|X},\mathcal P)$ as the contraction coefficient of the total variation distance given that input distributions are picked from the set $\mathcal P$. 
In the most general case, an SDPI considers $\mathcal P = \mathcal P_{\mathcal X}$. However, we will see below that without any restriction, $\eta_{\alpha}(P_{Y|X},\mathcal P_{\mathcal X}) = 1$ in many cases. We will therefore derive bounds for cases in which $\mathcal P \subsetneq \mathcal P_{\mathcal X}$.

\section{Conditions for $\eta_\alpha(P_{Y|X},\mathcal P_{\mathcal X})=1$}
\label{sec:generalcond}
The following result provides a condition on the channel $P_{Y|X}$ for which there is no contraction, that is, for which no strong data processing inequality holds, in the case $\mathcal P = \mathcal P_{\mathcal X}$.
\begin{proposition}
\label{prop:etaisone}
    Consider a discrete channel $P_{Y|X}$. Whenever
    \begin{equation}
    \label{eq:conditionnocontraction}
        \exists x,x' \in \mathcal X: \supp\big(P_{Y|X=x}\big) \cap \supp\big(P_{Y|X=x'}\big) = \emptyset,
    \end{equation}
    we have $\eta_{\alpha}(P_{Y|X},\mathcal P_{\mathcal X}) = 1$.  
\end{proposition}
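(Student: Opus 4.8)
The plan is to show that the DPI bound $\eta_\alpha(P_{Y|X},\mathcal P_{\mathcal X}) \le 1$ from \eqref{eq:DPI} is attained with equality, by exhibiting a pair of input distributions whose Rényi-divergence the channel preserves exactly. The driving intuition is that disjoint output supports make the channel \emph{losslessly invertible} on inputs restricted to the offending pair $\{x,x'\}$: from the output one can determine with certainty whether $x$ or $x'$ was the input, so no divergence can be contracted.

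First I would restrict attention to input distributions supported on the two-point set $\{x,x'\}$, writing $P_X = p\,\delta_x + (1-p)\,\delta_{x'}$ and $Q_X = q\,\delta_x + (1-q)\,\delta_{x'}$ with $p \ne q$ and $p,q \in (0,1)$; such distributions lie in $\mathcal P_{\mathcal X}$ and satisfy $0 < D_\alpha(P_X||Q_X) < \infty$, so the ratio defining the contraction coefficient is well-defined. Let $A \coloneqq \supp(P_{Y|X=x})$ and $B \coloneqq \supp(P_{Y|X=x'})$; by hypothesis $A \cap B = \emptyset$. The induced output distribution then decomposes as $P_{Y|X}\circ P_X = p\,P_{Y|X=x} + (1-p)\,P_{Y|X=x'}$, placing total mass $p$ on $A$ and $1-p$ on $B$, and analogously $P_{Y|X}\circ Q_X$ places mass $q$ on $A$ and $1-q$ on $B$.

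Next I would introduce the deterministic post-processing map $g:\mathcal Y \to \mathcal X$ defined by $g(y)=x$ for $y \in A$, $g(y)=x'$ for $y\in B$, and arbitrarily elsewhere. Because $A$ and $B$ are disjoint and together carry all of the output mass, composing with $g$ recovers the inputs exactly, $g \circ (P_{Y|X}\circ P_X) = P_X$ and $g \circ (P_{Y|X}\circ Q_X)=Q_X$. Applying the DPI to $g$ then yields $D_\alpha(P_X||Q_X) \le D_\alpha(P_{Y|X}\circ P_X \,||\, P_{Y|X}\circ Q_X)$, while the forward DPI \eqref{eq:DPI} gives the reverse inequality. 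Hence the two divergences coincide, the ratio equals $1$ for this pair, and so $\eta_\alpha(P_{Y|X},\mathcal P_{\mathcal X}) \ge 1$; combined with the DPI upper bound this forces equality. The argument is uniform in $\alpha$, as it uses only the DPI satisfied by $D_\alpha$ and never an explicit form of $f_\alpha$.

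The only delicate point, and the step I would justify most carefully, is that composing with $g$ reproduces $P_X$ and $Q_X$ exactly---this is precisely where disjointness of the supports enters. If the supports merely overlapped, output mass on $A \cap B$ would be ambiguous under $g$, the recovery would be lossy, and the identity $g \circ (P_{Y|X}\circ P_X)=P_X$ would fail, leaving only the trivial DPI direction. One should also confirm that the chosen binary $P_X,Q_X$ genuinely lie in $\mathcal P_{\mathcal X}$ with a nonzero, finite denominator, which holds as soon as $p\ne q$ with both in $(0,1)$.
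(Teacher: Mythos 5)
Your proof is correct, and it takes a genuinely different route from the paper. The paper's main proof is non-constructive: it chains three known results --- the identity $\eta_{\chi^2}(P_{Y|X},\mathcal P_{\mathcal X}) = \sup_{P_X}\rho_m^2(X;Y)$, the characterization by \citet{ahlswede1976spreading} that this quantity is strictly below one iff every pair of conditional output distributions has overlapping support, and the bound $\eta_\alpha \geq \eta_{\chi^2}$ from \cite[Corollary 2]{10619367} --- and then closes with the DPI. The paper's appendix proof is constructive like yours but uses a different mechanism: it takes a $\gamma$-perturbed point mass at $i$ and a $\gamma$-perturbed two-point mass on $\{i,j\}$ (both with full support), computes the ratio of R\'enyi divergences explicitly, and shows it tends to $1$ as $\gamma\to 0$. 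Your argument replaces both the explicit computation and the limit with a structural observation: disjointness of $A=\supp(P_{Y|X=x})$ and $B=\supp(P_{Y|X=x'})$ makes the channel losslessly invertible on inputs supported on $\{x,x'\}$, so the DPI applied to the decoder $g$ yields the reverse inequality and hence \emph{exact} equality $D_\alpha(P_{Y|X}\circ P_X||P_{Y|X}\circ Q_X)=D_\alpha(P_X||Q_X)$ for every binary pair with $p\neq q$, $p,q\in(0,1)$. This buys you several things: no limiting argument, no use of the explicit form of $f_\alpha$ (the proof applies verbatim to any divergence satisfying the DPI), and a slightly stronger conclusion, namely that the supremum defining $\eta_\alpha$ is attained rather than merely approached. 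What the paper's main proof buys in exchange is the connection to $\chi^2$-contraction and maximal correlation, which situates the proposition within general contraction theory; what its appendix construction buys over yours is essentially only extra caution about support issues --- caution you render unnecessary by taking $p,q\in(0,1)$, which makes $P_X$ and $Q_X$ mutually absolutely continuous so that $0< D_\alpha(P_X||Q_X)<\infty$ and the defining ratio is well-posed.
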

\begin{IEEEproof}
 We provide a constructive proof in Appendix \ref{app:proofetaisone}. Here, we show Proposition \ref{prop:etaisone} by a simple chain of arguments following from previous results. Let $(X, Y) \sim P_{Y|X} \times P_X$ be arbitrary but fixed. Define the \emph{maximal correlation} as
    \begin{equation}
       \rho_m(X;Y) \coloneqq \sup_{f,g} \mathbb E_{(X,Y) \sim P_{XY}}\Big[f(X)g(Y)\Big],
    \end{equation}
    where the supremum is over all $f: \mathcal X \to \mathbb R$ and $g: \mathcal Y \to \mathbb R$ such that 
    \begin{equation}
        \mathbb E\big[f(X)\big] = \mathbb E\big[g(Y)\big] = 0, \quad \mathbb E\big[f^2(X)\big] = \mathbb E\big[g^2(Y)\big] = 1. 
    \end{equation}
    The contraction coefficient of the $\chi^2$-divergence is closely related to $\rho_m$, and it is shown in, e.g., \cite[Thm. 33.6(c)]{Polyanskiy_Wu_2025} that
    \begin{equation}
        \eta_{\chi^2}(P_{Y|X},\mathcal P_{\mathcal X}) = \sup_{P_X \in\mathcal P_{\mathcal X}} \rho_m^2(X;Y) \eqqcolon s(P_{Y|X}). 
    \end{equation}
    It was shown by \citet{ahlswede1976spreading} that $s(P_{Y|X}) < 1$ if and only if for every pair $x,x'$, it holds that $\supp(P_{Y|X=x}) \cap \supp(P_{Y|X=x'}) \neq \emptyset$. Together with the data processing inequality for $f$-divergences, this implies that whenever \eqref{eq:conditionnocontraction} holds, we have $\eta_{\chi^2}(P_{Y|X},\mathcal P_{\mathcal X}) = 1$. Next, it was shown in \cite[Corollary 2]{10619367} that 
    \begin{equation}
        \eta_\alpha(P_{Y|X},\mathcal P_{\mathcal X}) \geq \eta_{\chi^2}(P_{Y|X},\mathcal P_{\mathcal X}).
    \end{equation}
    By the data processing inequality for Rényi-divergences, we also have $\eta_\alpha(P_{Y|X},\mathcal P_{\mathcal X}) \leq 1$. Hence, if \eqref{eq:conditionnocontraction} holds, we have $\eta_\alpha(P_{Y|X},\mathcal P_{\mathcal X}) = 1$.
\end{IEEEproof}

\subsection{Connection to the Confusion Graph of $P_{Y|X}$}
Proposition~\ref{prop:etaisone} states that a channel does not contract the Rényi-divergence whenever there exist two input symbols that cause conditional distributions with disjoint support at the channel output. This condition can equivalently be expressed as requiring a channel's \emph{confusion graph} to be incomplete \cite{diestel2017}: Define the confusion graph $G=(V,W)$ by $V = \mathcal X$ and
\begin{align}
    W = \{(x_i,x_j) \in \mathcal X^2 \mid \exists\, y\in \mathcal Y: x_i \in \mathcal X^{(y)} \wedge x_j \in \mathcal X^{(y)} \},
\end{align}
where $\mathcal X^{(y)}$ for each $y$ denotes the positions at which the channel matrix contains non-zero entries, that is,
\begin{equation}
        \mathcal{X}^{(y)} \coloneqq \{x \in \mathcal X: P_{Y|X=x}(y) > 0\}.
\end{equation}
Heuristically, an edge $(x_i,x_j)$ between input symbols in the confusion graph indicates that there exists an output symbol $y \in \mathcal Y$ that can be caused by both $x_i$ or $x_j$ at the channel input. That is, a missing edge (compared to the complete version) indicates the existence of a pair of symbols that can be sent over the channel without being confused with each other. The confusion graph---along with its graph-theoretic dual---is extensively used in zero-error information theory, see e.g., \cite{1056798,720537} for a detailed account of results and methods.

Note that a fully connected confusion graph is not necessarily equivalent to a strictly positive channel matrix. Proposition~\ref{prop:etaisone} in a loose sense states that for \emph{sufficiently sparse} $P_{Y|X}$, Rényi-divergences do not contract in general. The following examples illustrates this condition on the channel matrices.
\begin{example}[Block-diagonal matrices]
\label{ex:blockdiagonal}
    Any block-diagonal matrix $\text{diag}(B_1,\dots,B_m)$ with each $B_i \in \mathbb R^{k_i\times k_i},\, k_i \in \mathbb N$, satisfies $\eta_\alpha(P_{Y|X},\mathcal P_{\mathcal X}) = 1$.
\end{example}
\begin{example}[$k$-singular channels]
    Inspired by \cite{6870478}, we say a $N\times N$ channel $P_{Y|X}$ is \emph{$k$-singular} if
    \begin{equation}
        P_{Y|X=x}(y) = \begin{cases}
            \frac{1}{k}, \text{ if } x \in \mathcal X^{(y)} \\
            0, \text{ otherwise.}
        \end{cases}
    \end{equation} By Example~\ref{ex:blockdiagonal}, any $k$-singular channel with $k \leq N/2$ satisfies $\eta_\alpha(P_{Y|X},\mathcal P_{\mathcal X}) = 1$. For illustration, consider the channels
    \begin{equation}
        P = \begin{bmatrix}
            \nicefrac{1}{3} & \nicefrac{1}{3} & \nicefrac{1}{3} & 0\\
            \nicefrac{1}{3} & \nicefrac{1}{3} & 0 & \nicefrac{1}{3}\\
            \nicefrac{1}{3} & 0 & \nicefrac{1}{3} & \nicefrac{1}{3} \\
            0 & \nicefrac{1}{3} & \nicefrac{1}{3} & \nicefrac{1}{3}
        \end{bmatrix}, \quad
        Q = \begin{bmatrix}
            \nicefrac{1}{2} & \nicefrac{1}{2} & 0 & 0 \\
            \nicefrac{1}{2} & \nicefrac{1}{2} & 0 & 0 \\
            0 & 0 & \nicefrac{1}{2} & \nicefrac{1}{2} \\
            0 & 0 & \nicefrac{1}{2} & \nicefrac{1}{2}.
        \end{bmatrix}
    \end{equation}
    Here, $P$ is $3$-singular and $Q$ is $2$-singular. It is easy to see that the assertion in Proposition~\ref{prop:etaisone} applies to $Q$, but not to $P$.
\end{example}

\section{Pinsker-type Inequalities for $f_\alpha$-divergence}
\label{sec:pinsker}
The result in Section \ref{sec:generalcond} shows that in many cases, the contraction coefficient can not be smaller than one whenever $\mathcal P = \mathcal P_{\mathcal X}$. This motivates us to investigate the behavior of $\eta_{\alpha}$ in cases where $\mathcal P \subsetneq \mathcal P_{\mathcal X}$. As we will see in Section \ref{sec:RLDPex}, this can be useful in cases in which partial knowledge about the input distributions to a channel is available.

In this section, we present a generalization of the reverse Pinker's inequality and Pinsker's inequality for Rényi-divergences. These inequalities will then be used to derive \say{SDPI-style} bounds in Section \ref{sec:sdpi}. In most of what follows, we will work with the $f_\alpha$-divergence as a substitute for Rényi-divergence. Due to the monotonic transformation in \eqref{eq:falphatoalpha}, conversion between the two can be done easily at any point.

The following theorem bounds the $f_\alpha$-divergence as a function of the total variation between two distributions. It was first shown by \citet{8630660} for general $f$-divergences. Here, we present a proof that utilizes the integral representation of $f$-divergences via the $E_\gamma$-divergence in \eqref{eq:fdivisintegraloverEgamma}. We remark that the same technique was recently used to proof Binette's inequality in a quantum context by \citet[Proposition 5.2]{Hirche:2023caq}. However, the proof presented below slightly simplifies arguments, and we include it here as we believe it to be instructive. 
\begin{theorem}[Reverse Pinkser's Inequality for $f_\alpha$-divergence, {special case of {\cite[Theorem 1]{8630660}}}]
\label{thm:gammabound}
Let $\alpha > 1$. For any two distributions $P_X, Q_X \in \mathcal P$, define
\begin{equation}
    \Gamma_{\max}(X) \coloneqq \sup_{P_X, Q_X \in \mathcal P, \, x\in\mathcal X} \frac{P_X(x)}{Q_X(x)},
\end{equation}
\begin{equation}
    \Gamma_{\min}(X) \coloneqq \inf_{P_X,Q_X \in \mathcal P, \, x\in\mathcal X} \frac{P_X(x)}{Q_X(x)}.
\end{equation}
Then, the maximum $f_\alpha$-divergence between the two distributions is upper bounded by
\begin{equation}
    D_{f_\alpha}(P_X||Q_X) \leq \textnormal{TV}(P_X||Q_X)R_\alpha(\Gamma_{\max}(X),\Gamma_{\min}(X)),
\end{equation} 
where for $u \in [1,\infty) \cup \{\infty\}$ and $v \in [0,1]$,
\begin{equation}
    R_\alpha(u,v) = \frac{u^{\alpha}-1}{u-1} - \frac{1-v^{\alpha}}{1-v}.
\end{equation}
\end{theorem}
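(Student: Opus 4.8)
The plan is to prove this via the integral representation \eqref{eq:fdivisintegraloverEgamma}, which is precisely why the $E_\gamma$-divergence was introduced as a tool. Since $\alpha > 1$ gives $f_\alpha(x) = x^\alpha - 1$, a direct computation yields $f_\alpha''(\gamma) = \alpha(\alpha-1)\gamma^{\alpha-2}$, so that
\begin{equation}
    D_{f_\alpha}(P_X\|Q_X) = \alpha(\alpha-1)\int_0^\infty E_\gamma(P_X\|Q_X)\,\gamma^{\alpha-2}\,d\gamma .
\end{equation}
The entire problem then reduces to controlling the single-parameter family $\gamma \mapsto E_\gamma(P_X\|Q_X)$ and integrating the resulting bound against the weight $\alpha(\alpha-1)\gamma^{\alpha-2}$.

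The key step, and the part I expect to be the crux, is to bound $E_\gamma$ \emph{linearly} in terms of the total variation. Writing the density ratio $r(x) = P_X(x)/Q_X(x)$, one rewrites $E_\gamma$ in positive-part form as $E_\gamma(P_X\|Q_X) = \sum_x Q_X(x)\,(r(x)-\gamma)_+$ for $\gamma \ge 1$ and $E_\gamma(P_X\|Q_X) = \sum_x Q_X(x)\,(\gamma - r(x))_+$ for $\gamma \le 1$. Each summand is a maximum of two affine functions of $\gamma$, hence convex, and the weights $Q_X(x)\ge 0$ keep the sum convex on each of $(0,1]$ and $[1,\infty)$. Because every ratio obeys $\Gamma_{\min}\le r(x)\le \Gamma_{\max}$, this map vanishes for $\gamma \le \Gamma_{\min}$ and for $\gamma \ge \Gamma_{\max}$, is nondecreasing on $[\Gamma_{\min},1]$ and nonincreasing on $[1,\Gamma_{\max}]$, and attains its peak at $\gamma=1$ with value $E_1(P_X\|Q_X) = \textnormal{TV}(P_X\|Q_X)$. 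Convexity on each branch then forces $E_\gamma$ below its chord:
\begin{equation}
    E_\gamma(P_X\|Q_X) \le \textnormal{TV}(P_X\|Q_X)\cdot
    \begin{cases}
        \dfrac{\gamma - \Gamma_{\min}}{1-\Gamma_{\min}}, & \Gamma_{\min}\le \gamma \le 1,\\[1.5mm]
        \dfrac{\Gamma_{\max}-\gamma}{\Gamma_{\max}-1}, & 1 \le \gamma \le \Gamma_{\max},
    \end{cases}
\end{equation}
with $E_\gamma = 0$ outside $[\Gamma_{\min},\Gamma_{\max}]$. Since $\Gamma_{\max},\Gamma_{\min}$ are taken over all of $\mathcal P$, these chords hold uniformly for the fixed pair; equivalently one may argue pointwise, as the ratio $(r-\gamma)/(r-1)$ is monotone in $r$ for $r>\gamma>1$.

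Finally, I would substitute the two chord bounds into the integral and evaluate the elementary integrals $\int_1^{\Gamma_{\max}}(\Gamma_{\max}-\gamma)\gamma^{\alpha-2}\,d\gamma$ and $\int_{\Gamma_{\min}}^1(\gamma-\Gamma_{\min})\gamma^{\alpha-2}\,d\gamma$ in closed form. After multiplying by the prefactors $\alpha(\alpha-1)/(\Gamma_{\max}-1)$ and $\alpha(\alpha-1)/(1-\Gamma_{\min})$, the right branch contributes $\frac{\Gamma_{\max}^\alpha - 1}{\Gamma_{\max}-1} - \alpha$ and the left branch contributes $\alpha - \frac{1 - \Gamma_{\min}^\alpha}{1-\Gamma_{\min}}$; the spurious $\pm\alpha$ terms cancel, leaving exactly $R_\alpha(\Gamma_{\max},\Gamma_{\min})$ multiplied by $\textnormal{TV}(P_X\|Q_X)$, as claimed. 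The degenerate cases are handled by the obvious limits of $R_\alpha$: $P_X = Q_X$ makes both sides vanish, $\Gamma_{\min}=0$ gives the finite value $1$ in the second term, and $\Gamma_{\max}=\infty$ renders the bound vacuous, consistent with $R_\alpha$ being increasing in $\Gamma_{\max}$ and decreasing in $\Gamma_{\min}$.
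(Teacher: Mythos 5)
Your proposal is correct and follows essentially the same route as the paper's proof: the $E_\gamma$ integral representation \eqref{eq:fdivisintegraloverEgamma}, the observation that $E_\gamma$ vanishes outside $[\Gamma_{\min},\Gamma_{\max}]$, the convexity-based chord bound peaking at $E_1=\textnormal{TV}$, and integration against $\alpha(\alpha-1)\gamma^{\alpha-2}$. The only (welcome) differences are cosmetic: you derive the convexity/monotonicity and support properties directly from the positive-part form $\sum_x Q_X(x)(r(x)-\gamma)_+$ rather than citing them, you justify the passage from per-pair to supremal $\Gamma$'s via monotonicity of the chord in $r$, and you carry out the closed-form integration (with the cancelling $\pm\alpha$ terms) that the paper leaves to the reader.
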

\begin{IEEEproof}
Fix $Q_X$ and $P_X$ and let $\Delta \coloneqq \text{TV}(P_X||Q_X)$. The proof depends on the fact that the only interval on which we may have $E_\gamma(P_X||Q_X) \neq 0$ is  $(\Gamma_{\min}(X),\Gamma_{\max}(X))$. To see this, observe that 
    \begin{align}
        E_\gamma(P_X||Q_X) = \frac{1}{2}\sum_{x \in \mathcal X}\biggl|P_X(x)-\gamma Q_X(x)\biggr| -\frac{1}{2}|1-\gamma|,
    \end{align}
    and whenever $P_X(x) - \gamma Q_X(x) \leq 0$ for all $x \in \mathcal X$ or  $P_X(x) - \gamma Q_X(x) \geq 0$ for all $x \in \mathcal X$, we have
    \begin{align}
        E_\gamma(P_X||Q_X) &= \frac{1}{2}\sum_{x \in \mathcal X}\biggl|P_X(x)-\gamma Q_X(x)\biggr| -\frac{1}{2}|1-\gamma| \\&= \frac{1}{2}\sum_{x \in \mathcal X}\biggl(\pm P_X(x) \mp \gamma Q_X(x)\biggr) -\frac{1}{2}|1-\gamma| \\&= \frac{1}{2}|1-\gamma| -\frac{1}{2}|1-\gamma| = 0.
    \end{align}
    This happens exactly when 
    \begin{equation}
        \gamma \leq \min_x \frac{P_X(x)}{Q_X(x)} \eqqcolon \Gamma_{\min}(X,P_X,Q_X),  
    \end{equation}
    or,
    \begin{equation}
        \gamma \geq \max_x \frac{P_X(x)}{Q_X(x)} \eqqcolon \Gamma_{\max}(X,P_X,Q_X).
    \end{equation}
     Hence, $E_\gamma$ must be zero everywhere but on the interval between these two quantities, that is, the only values of $\gamma$ that allow a non-zero value of $E_\gamma$ are
    \begin{equation}
        \gamma \in (\Gamma_{\min}(X,P_X,Q_X),\Gamma_{\max}(X,P_X,Q_X)) \eqqcolon I_{\geq 0}.
    \end{equation}
    From this, we get
    \begin{equation}
        E_\gamma(P_X||Q_X) = \begin{cases} 
            0, &\text{ if } \gamma \notin I_{\geq 0}, \\[.5em]
            E_\gamma(P_X||Q_X), &\text{ if } \gamma \in I_{\geq 0},
        \end{cases}
    \end{equation}
     What remains is to bound $E_\gamma$ on the interval $I_{\geq0}$, which yields an upper bound on $D_{f_\alpha}$ by \eqref{eq:fdivisintegraloverEgamma}. To do this, note that $\gamma \mapsto E_\gamma(P||Q)$ is convex and increasing on $[0,1]$ and convex and decreasing on $(1,\infty)$ \cite{polyanskiy2015dissipation}. Hence, the maximum value is obtained at $\gamma = 1$, where we have $E_1(Q_X||P_X) = \Delta$. By convexity, we can then upper bound $E_\gamma$ by the piecewise linear function 
    \begin{align}
        E_\gamma&(P_X||Q_X) \leq g(\gamma) \\&\coloneqq \begin{cases}
            0, &\text{if } \gamma \notin I_{\geq 0},\\[.7em]
            \Delta \frac{\gamma - \Gamma_{\min}(X,P_X,Q_X)}{1-\Gamma_{\min}(X,P_X,Q_X)}, &\text{if } \gamma \in I_{\geq 0} \cap (0,1],\\[.7em]
            \Delta\frac{-\gamma + \Gamma_{\max}(X,P_X,Q_X)}{\Gamma_{\max}(X,P_X,Q_X)-1}, &\text{if } \gamma \in I_{\geq 0} \cap (1,\infty).
        \end{cases}
    \end{align}
    The theorem then follows from maximizing $g(\gamma)$ over $P_X,Q_X \in \mathcal P$, which together with \eqref{eq:fdivisintegraloverEgamma} yields
    \begin{equation}
    \label{eq:renyiEgammaintegral}
        D_{f_\alpha}(P_X||Q_X) \leq \alpha(\alpha-1) \int_{\Gamma_{\min}(X)}^{\Gamma_{\max}(X)}g(\gamma)\gamma^{\alpha-2}d\gamma,
    \end{equation}
    and the desired statement follows by solving this integral.
\end{IEEEproof}
Note that this proof technique can also be used to obtain the more general result in \cite{8630660} for any $f$-divergence for which $f$ is twice differentiable by replacing $\gamma^{\alpha-2}$ with the general $f''(\gamma)$ in \eqref{eq:renyiEgammaintegral}. Further, if more information about the distributions is known, more specific bounds on $f(\gamma)$ can potentially be used to obtain more specific inequalities in these cases.
\begin{remark}
    The bound in Theorem~\ref{thm:gammabound} recovers the definition of Rényi-divergence of order infinity, that is, it is asymptotically tight with $D_\infty(P_X||Q_X) = \log\Gamma_{\max}(X)$.
    To see this, let again $\Delta \coloneqq \text{TV}(P_X||Q_X)$ and observe that we can plug in the result form Theorem~\ref{thm:gammabound} into \eqref{eq:falphatoalpha} to obtain
    \begin{equation}
        D_{\alpha}(Q_X||P_X) \leq \frac{1}{\alpha-1}\log\big(\Delta R_\alpha(\Gamma_{\max}(X),\Gamma_{\min}(X)) + 1\big).
    \end{equation}
    Since $\Gamma_{\min}(X) \leq 1$ and $\Gamma_{\max}(X) \geq 1$, we get the limit as
    \begin{align}
        D_\infty(Q_X||P_X) &\leq \lim_{\alpha \to \infty}\frac{1}{\alpha-1}\log \biggl(R_\alpha\big(\Gamma_{\max}(X),\Gamma_{\min}(X)\big)\bigg) 
        \\&= \lim_{\alpha \to \infty}\frac{\alpha}{\alpha-1}\log \Gamma_{\max}(X) = \log \Gamma_{\max}(X).
    \end{align}
    The bound in \cite[Theorem 35]{7552457}, \cite[Theorem 3]{7360766} is obtained from Theorem~\ref{thm:gammabound} by assuming the lower bound $\Gamma_{\min}(X) = 0$.
\end{remark}

Next, we present a generalized Pinsker's inequality for $f_\alpha$-divergence of order $\alpha > 1$.

\begin{theorem}[Pinsker's inequality for $f_\alpha$-divergences]
\label{thm:pinsker}
    For any $\alpha>1$ and any two mutually absolutely continuous probability measures $P_X \ll \gg Q_X$, we have 
    \begin{equation}
         D_{f_\alpha}(P_X||Q_X) \geq g_\alpha\biggl(\text{TV}(P_X||Q_X)\biggr),
    \end{equation}
    where
    \begin{equation}
    \label{eq:pinskerfunction}
        g_\alpha(t) = \begin{cases}
            e^{2(\alpha-1)t^2}-1, &\text{ if } t < \nicefrac{1}{\alpha} \text{ and }\alpha < 2,\\
            (4t^2+1)^{\alpha-1}-1, &\text{ if } t<\nicefrac{1}{\alpha} \text{ and } \alpha \geq 2,\\
            (1-t)^{1-\alpha}-1, &\text{ otherwise. }
        \end{cases}
    \end{equation}
\end{theorem}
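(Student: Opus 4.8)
The plan is to pass to the Rényi divergence through \eqref{eq:falphatoalpha}. Since $D_{f_\alpha}(P_X\|Q_X)+1 = e^{(\alpha-1)D_\alpha(P_X\|Q_X)}$, the claimed inequality is equivalent, writing $t \coloneqq \text{TV}(P_X\|Q_X)$, to the three Rényi lower bounds $D_\alpha \ge 2t^2$ (for $\alpha<2$), $D_\alpha \ge \log(1+4t^2)$ (for $\alpha\ge2$), and $D_\alpha \ge -\log(1-t)$ (in the regime $t\ge\nicefrac1\alpha$). I would establish the first two using only that $\beta\mapsto D_\beta$ is non-decreasing in its order, and the third by the joint-range reduction, where it is attained at a boundary distribution (which is why the theorem is tight exactly there).

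For the small-$t$ branches I would anchor $D_\alpha$ at an integer order below $\alpha$. For $\alpha\in(1,2)$, monotonicity gives $D_\alpha \ge D_1 = \kl(P_X\|Q_X)$, and the classical Pinsker inequality $\kl\ge 2t^2$ yields $D_{f_\alpha}\ge e^{2(\alpha-1)t^2}-1$. For $\alpha\ge2$, monotonicity gives $D_\alpha\ge D_2 = \log(1+\chi^2(P_X\|Q_X))$; setting $Z = dP_X/dQ_X$ and $W=|Z-1|$, one has $\chi^2 = \mathbb E_{Q_X}[W^2] \ge (\mathbb E_{Q_X}W)^2 = 4t^2$, and exponentiating gives $D_{f_\alpha}\ge(1+4t^2)^{\alpha-1}-1$. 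Both bounds in fact hold for every $t$; the case split at $\alpha=2$ is dictated precisely by which integer anchor ($D_1$ or $D_2$) lies below $\alpha$ and is therefore available through monotonicity.

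The substantive branch is $t\ge\nicefrac1\alpha$. Following the joint-range technique of \cite{5773031}, the joint range of $(\text{TV},D_{f_\alpha})$ is the convex hull of its restriction to binary alphabets, so it suffices to minimize $D_{f_\alpha}$ over binary pairs at fixed $t$. Parametrizing $P_X=(q+t,\,1-q-t)$, $Q_X=(q,\,1-q)$ reduces this to a one-variable $J(q)$ whose derivative factors as $G(r_1)-G(r_2)$, where $r_1\ge1\ge r_2$ are the two likelihood ratios and $G(r)=\alpha r^{\alpha-1}-(\alpha-1)r^\alpha$ is unimodal with peak $G(1)=1$; since $r_1,r_2$ both decrease in $q$, $J$ is convex, and $J'$ at the endpoint $q=1-t$ equals $G\!\big(1/(1-t)\big)$, which is $\le0$ exactly when $t\ge\nicefrac1\alpha$. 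In that regime $J$ is monotone, the minimizer is the point-mass pair $P_X=(1,0)$, $Q_X=(1-t,t)$, and its value is $(1-t)^{1-\alpha}-1$. Equivalently, and more cleanly for general (non-binary) $P_X,Q_X$, I would exhibit an affine minorant of $z\mapsto z^\alpha$ in the variables $(z,\min(z,1))$, touching at $z=0$ and tangent at $z=1/(1-t)$; taking $Q_X$-expectations with $\mathbb E_{Q_X}[Z]=1$ and $\mathbb E_{Q_X}[\min(Z,1)]=1-t$ collapses the bound to $(1-t)^{1-\alpha}$. The main obstacle is exactly the validity of this minorant: it dominates $z^\alpha$ on $[0,1]$ only when its slope $G\!\big(1/(1-t)\big)\le0$, i.e.\ precisely when $t\ge\nicefrac1\alpha$. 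For $t<\nicefrac1\alpha$ the boundary pair ceases to be optimal and the interior minimizer has no closed form, which is why the small-$t$ branches must fall back on the monotonicity bounds above rather than an exact minimization, and why the stated bound is loose there but tight for $t\ge\nicefrac1\alpha$.
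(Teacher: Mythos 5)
Your proof is correct, and its skeleton matches the paper's: the same three-way case split, the same anchor $D_\alpha \ge D_1 \ge 2t^2$ (order-monotonicity plus classical Pinsker) for $\alpha<2$, and for $t\ge\nicefrac{1}{\alpha}$ a reduction to binary pairs whose minimizer is the boundary pair $P_X=(1,0)$, $Q_X=(1-t,t)$ with value $(1-t)^{1-\alpha}-1$. The two substantive branches are executed differently, though, and the differences are worth recording. For $\alpha\ge 2$ the paper stays inside the binary reduction and applies Jensen's inequality to the binary objective, obtaining $\bigl(1+\tfrac{t^2}{(p+t)(1-p-t)}\bigr)^{\alpha-1}-1\ge(1+4t^2)^{\alpha-1}-1$; you instead anchor at $D_2=\log(1+\chi^2)$ and use $\chi^2(P_X\|Q_X)=\mathbb{E}_{Q_X}[(Z-1)^2]\ge\bigl(\mathbb{E}_{Q_X}|Z-1|\bigr)^2=4\,\text{TV}^2$, which proves the same bound directly on general alphabets---equivalent content, but packaged so that no joint-range step is needed for this case. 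For $t\ge\nicefrac{1}{\alpha}$, your factorization $J'(q)=G(r_1)-G(r_2)$ with $G(r)=\alpha r^{\alpha-1}-(\alpha-1)r^\alpha$ unimodal reaches the same conclusion as the paper's explicit derivative bounds on $\psi_\alpha$, and your threshold $G\bigl(1/(1-t)\bigr)\le 0\iff t\ge\nicefrac{1}{\alpha}$ is exactly the paper's condition $\alpha t+p-1\ge 0$ at $p=0$ in disguise. Your affine-minorant alternative---minorizing $z^\alpha$ by a function affine in $\bigl(z,\min(z,1)\bigr)$, touching at $z=0$ and tangent at $z=1/(1-t)$, then using $\mathbb{E}_{Q_X}[Z]=1$ and $\mathbb{E}_{Q_X}[\min(Z,1)]=1-t$---is a genuinely different argument: it establishes the tight branch on arbitrary alphabets without invoking the convex-hull reduction of \cite{5773031} at all, and thereby sidesteps the small caveat in your own joint-range sentence (strictly, passing from the convex hull of the binary range to the binary minimum requires the resulting lower bound to be convex in $t$, which $(1-t)^{1-\alpha}-1$ fortunately is; the paper's cited reduction asserts equality directly). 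Combined with your monotonicity bounds for small $t$, this gives a fully joint-range-free proof of the theorem, which the paper does not have. One wording slip to fix: the minorant must be \emph{dominated by} $z^\alpha$ on $[0,1]$, not ``dominate'' it; the validity condition $G\bigl(1/(1-t)\bigr)\le 0$ you state for it is nonetheless the correct one.
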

\begin{IEEEproof}
  The proof follows from bounding the \emph{joint range} \cite{5773031} of the divergences and is presented in Appendix \ref{app:proofpinsker}.
\end{IEEEproof}
\begin{remark}
For any $\alpha>1$, the bound in Theorem~\ref{thm:pinsker} is attained by the pair of distributions $P_X = [0,1]$, $Q_X = [t,1-t]$ in the case  $t \geq \nicefrac{1}{\alpha}$. Hence, it is tight for this range of values. It is also tight for all $t\geq 0$ given that $\alpha = 2$. In this case, it reduces to the joint-range-based bound between $\chi^2$-divergence and total variation distance, as e.g. presented in \cite{Polyanskiy_Wu_2025}. As a result of the former, it is also asymptotically tight for $\alpha \to \infty$, since in this case, the condition $t \geq \nicefrac{1}{\alpha}$ holds for all $t > 0$.
    Note a tight bound in the case $t<\nicefrac{1}{\alpha}$ can be obtained by solving
\begin{equation}
    \frac{\partial}{\partial p}\psi_\alpha (p,t) 
       = 0.
\end{equation}
In the range $t\in (0,\nicefrac{1}{\alpha})$, this equation has a real solution $p^*$ that satisfies $p^* \in [0,1]$. Due to the convexity of $\psi_\alpha$ in $p$, a (tight) lower bound is then given by $g_\alpha(t) = \psi_\alpha(p^*,t)$.
However, it remains unclear if $p^*$ can be expressed in closed form. Due to this complication, the result above utilizes more rudimentary bounds in this case. Further, we invoke standard Pinsker's inequality for relative entropy to derive the first case of \eqref{eq:pinskerfunction}. This bound is not the best we can do for this case. More specifically, the optimal Pinsker's inequality has been found in \cite{1201071} in implicit form. We rely on the more rudimentary standard Pinsker bound here for simplicity.
\end{remark}
\begin{remark}
\label{rem:pinskerfuncinv}
    The function $g_\alpha$ is injective, but not bijective. That is, in order to obtain an inverse bound (later used in Corollary~\ref{corr:SDPI}) we can define the piecewise function 
    \begin{equation}
        g^{-1}_\alpha(s) = \begin{cases}
            \sqrt{\frac{\log(s+1)}{2(\alpha-1)}}, &\text{if } \alpha < 2, s <h_1(\alpha), \\
            \frac{1}{2}\sqrt{(s+1)^{(\frac{1}{\alpha-1})}-1}, &\text{if } \alpha \geq 2, s < h_2(\alpha) \\
            \max\{1-(s+1)^{\frac{1}{1-\alpha}},\nicefrac{1}{\alpha}\}, &\text{otherwise},
        \end{cases}
    \end{equation}
    where $h_2(\alpha) \coloneqq (1+4\alpha^{-2})^{\alpha-1}-1$ and $h_1(\alpha) \coloneqq 2-\nicefrac{2}{\alpha}$.
\end{remark}
An example of the presented bound on the joint range for $\alpha=4$ is shown in Figure \ref{fig:jointrange}. Other values of $\alpha$ yield a visually similar result, with the discontinuity at $\nicefrac{1}{\alpha}$ moving closer to the origin as the value of $\alpha$ increases. 
\begin{figure}
\centering
\includegraphics[scale=0.45]{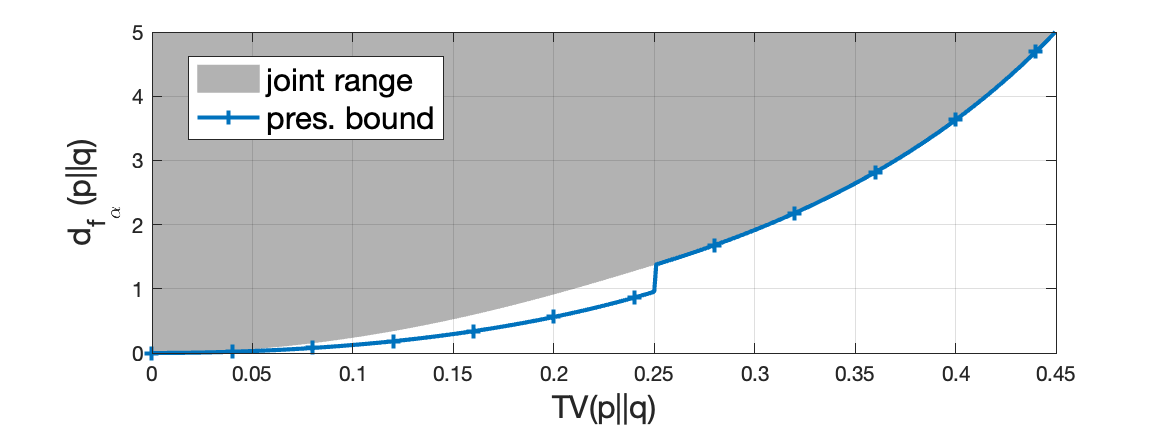}
    \caption{Joint range of $D_{f_{\alpha}}(P||Q)$ and $\text{TV}(P||Q)$ for the case $\alpha=4$ as well as the bound presented in Theorem~\ref{thm:pinsker}.}
    \label{fig:jointrange}
\end{figure}

\section{Strong Data Processing with Pinsker}
\label{sec:sdpi}
The above results can now be used to reason about the data processing properties of channels by considering the presented inequalities \say{across} a given channel. That is, we can modify the bound presented in Theorem~\ref{thm:gammabound} to yield an upper bound on $D_{f_\alpha}(P_Y||Q_Y)$ in terms of $\text{TV}(P_X||Q_X)$. To this end, consider 
\begin{equation}
\label{eq:crosschannelgammamax}
    \Gamma_{\max}(Y)  = \sup_{P_X,Q_X \in \mathcal P,\,y\in \mathcal Y} \frac{\sum_{x\in\mathcal X}P_{Y|X=x}(y)Q_X(x)}{\sum_{x\in\mathcal X}P_{Y|X=x}(y)P_X(x)},
\end{equation}
\begin{equation}
\label{eq:crosschannelgammamin}
    \Gamma_{\min}(Y)  = \inf_{P_X,Q_X \in \mathcal P,\,y\in \mathcal Y} \frac{\sum_{x\in\mathcal X}P_{Y|X=x}(y)Q_X(x)}{\sum_{x\in\mathcal X}P_{Y|X=x}(y)P_X(x)},
\end{equation}
as the \say{cross-channel} modification of the crucial extremal quantities in Theorem~\ref{thm:gammabound}. We get the following result.
\begin{corollary}
\label{corr:SDPI}
    For a discrete channel $P_{Y|X}$, distributions $P_X, Q_X \in \mathcal P \subseteq \mathcal P_{\mathcal X}$, $P_Y = P_{Y|X}\circ P_X$, $Q_Y = P_{Y|X}\circ Q_X$ and some $\alpha > 1$, we have
    \begin{align}
        D_{f_\alpha}&(P_Y||Q_Y) \\&\leq \eta_{TV}(P_{Y|X},\mathcal P) R_{\alpha}(\Gamma_{\max}(Y),\Gamma_{\min}(Y))\\&\qquad \qquad \qquad \qquad \qquad \times g_\alpha^{-1}\Big(D_{f_\alpha}(P_X||Q_X)\Big).
    \end{align}
\end{corollary}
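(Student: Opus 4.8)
The plan is to obtain the bound by chaining three of the results established above, descending from the output $f_\alpha$-divergence to the input $f_\alpha$-divergence through the total variation distance. Since all three tools are already in hand, the argument is essentially a bookkeeping exercise; the only care needed is in verifying that each is invoked with the correct extremal quantities.

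First I would apply the reverse Pinsker inequality of Theorem~\ref{thm:gammabound} \emph{at the channel output}, i.e., to the pair $P_Y, Q_Y$ regarded as elements of the image set $\{P_{Y|X}\circ P_X : P_X \in \mathcal P\} \subseteq \mathcal P_{\mathcal Y}$. The point to check is that the cross-channel quantities $\Gamma_{\max}(Y)$ and $\Gamma_{\min}(Y)$ defined in \eqref{eq:crosschannelgammamax}--\eqref{eq:crosschannelgammamin} are exactly the uniform supremum and infimum of the output likelihood ratio $P_Y(y)/Q_Y(y)$ over this image set: because the extrema in those definitions range over \emph{both} $P_X$ and $Q_X$, the ratio is symmetric under the relabeling $P_X \leftrightarrow Q_X$, so the placement of $P_X$ and $Q_X$ in numerator and denominator is immaterial. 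Re-instantiating Theorem~\ref{thm:gammabound} with $\mathcal X \mapsto \mathcal Y$ and $\mathcal P \mapsto P_{Y|X}\circ\mathcal P$ then yields
\begin{equation}
    D_{f_\alpha}(P_Y||Q_Y) \leq \text{TV}(P_Y||Q_Y)\, R_\alpha\big(\Gamma_{\max}(Y),\Gamma_{\min}(Y)\big).
\end{equation}
Next I would contract the output total variation back to the input using the definition of the TV contraction coefficient (analogous to \eqref{eq:contractioncoeffdef}), which gives $\text{TV}(P_Y||Q_Y) \leq \eta_{\text{TV}}(P_{Y|X},\mathcal P)\,\text{TV}(P_X||Q_X)$ for every admissible pair. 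Finally I would invert the Pinsker inequality of Theorem~\ref{thm:pinsker}: since $g_\alpha$ is non-decreasing and $g_\alpha^{-1}$ of Remark~\ref{rem:pinskerfuncinv} is a non-decreasing left inverse, the estimate $D_{f_\alpha}(P_X||Q_X) \geq g_\alpha(\text{TV}(P_X||Q_X))$ rearranges to $\text{TV}(P_X||Q_X) \leq g_\alpha^{-1}(D_{f_\alpha}(P_X||Q_X))$. Substituting the last two inequalities into the first produces the claimed bound.

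The delicate point, and the one I expect to require the most care, is this final inversion, because $g_\alpha$ is injective but not surjective: it has a jump at $t = \nicefrac 1\alpha$, so for divergence values landing in the induced gap of its range a naive inverse would undershoot the true total variation. I expect the clamping $\max\{1-(s+1)^{1/(1-\alpha)},\nicefrac 1\alpha\}$ in the definition of $g_\alpha^{-1}$ to be exactly what guarantees that $g_\alpha^{-1}$ remains a valid upper bound on $\text{TV}$ across this discontinuity, i.e., that $t \leq g_\alpha^{-1}(g_\alpha(t))$ holds for all $t \in [0,1]$ while $g_\alpha^{-1}$ stays monotone, so the chaining preserves the inequality direction. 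The remaining edge case is $\Gamma_{\max}(Y) = \infty$ (absolute continuity failing at the output), in which $R_\alpha$ diverges for $\alpha > 1$ and the bound holds trivially; this needs no separate argument.
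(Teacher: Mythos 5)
Your proposal is correct and takes essentially the same route as the paper's own (very terse) proof: apply Theorem~\ref{thm:gammabound} at the channel output with the cross-channel quantities $\Gamma_{\max}(Y),\Gamma_{\min}(Y)$, contract total variation via the definition of $\eta_{\text{TV}}(P_{Y|X},\mathcal P)$, and then invert Theorem~\ref{thm:pinsker} using the $g_\alpha^{-1}$ of Remark~\ref{rem:pinskerfuncinv}. The extra details you flag---the symmetry of the $\Gamma$ quantities under relabeling $P_X \leftrightarrow Q_X$, and the fact that the clamping at $\nicefrac{1}{\alpha}$ makes $g_\alpha^{-1}$ a monotone upper-bounding inverse across the jump in $g_\alpha$---are precisely what the paper leaves implicit, and you handle them correctly.
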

\begin{proof}
    Follows from Theorem~\ref{thm:gammabound} with $P_Y = P_{Y|X} \circ P_X$ and $Q_Y = P_{Y|X} \circ Q_X$, the inequality in Theorem~\ref{thm:pinsker}, the inverse given in Remark \ref{rem:pinskerfuncinv} and since by definition, $\text{TV}(P_Y||Q_Y) \leq \eta_{TV}(P_{Y|X},\mathcal P)\text{TV}(P_X||Q_X)$ if $P_X,Q_X \in \mathcal P$.
\end{proof}
This result stresses the relation between $\Gamma_{\max}(Y)$ and $\Gamma_{\min}(Y)$ as formulated in \eqref{eq:crosschannelgammamax} and \eqref{eq:crosschannelgammamin} and the contraction behavior of a channel in terms of Rényi-divergence: If $\Gamma_{\max}(Y)-\Gamma_{\min}(Y)$ is small, the channel significantly contracts the $f_\alpha$-divergence. If the difference is large, contraction will be small.
What remains is to bound $\Gamma_{\max}(Y)$ and $\Gamma_{\min}(Y)$ so that an operational bound on this contraction is obtained. Such bounds should be use-case specific, as the structure of the uncertainty set $\mathcal P$, as well as specifics about the channel are crucial factors. One such bound can be obtained in the application of the above results to privacy amplification, which we will present in the following section.

\section{Privacy Amplification Bounds for RLDP}
\label{sec:RLDPex}
In this section, we show how the above technical tools can be used to obtain improved post-processing inequalities for Rényi local differential privacy (RLDP) \cite{mironov2017renyi,gilani2024unifying}. Consider the Markov chain $W-X-Y$, where $W$ is a random variable on a set $\mathcal W$ that represents some private data, and the transition kernel $P_{X|W}$ is a mechanism satisfying $(\varepsilon,\alpha)$-RLDP,\footnote{The notion of Rényi \emph{local} differential privacy as a local adaption of Rényi differential privacy \cite{mironov2017renyi} was recently suggested by \citet{gilani2024unifying}, where the authors refer to it as \emph{local Rényi differential privacy} (LRDP). Note that the class of discrete mechanisms satisfying any (finite) RLDP guarantee for any $\alpha$ is equivalent to the class of discrete mechanisms satisfying some (finite) LDP guarantee. Specifically, this class only includes mechanisms with strictly positive entries in the mechanism matrix.} i.e., 
\begin{align}
\text{RLDP}^{(\alpha)}&(P_{X|W})\\&\coloneqq \max_{w,w' \in \mathcal W,\,y \in \mathcal Y}D_{\alpha}(P_{X|W=w}||P_{X|W=w'}) \leq \varepsilon.
\end{align}
Further, assume that the channel $P_{Y|X}$ does not satisfy any finite RLDP guarantee. From the post-processing property of RLDP (which is just the DPI for Rényi-divergence), it is clear that the privacy guarantee of the system $P_{Y|W}$ will remain bounded by the guarantee of $P_{X|W}$. However, it is reasonable to assume that there are certain $P_{Y|X}$ with zero-valued elements that improve the privacy guarantee of $P_{Y|W}$ when cascaded with the RLDP mechanism $P_{X|W}$. Using the results above, we can bound the $f_\alpha$-divergence of $P_{Y|W}$ by
\begin{align}
    D_{f_\alpha}&(P_{Y|W=w}||P_{Y|W=w'}) \\&= D_{f_\alpha}(P_{Y|X} \circ P_{X|W=w}||P_{Y|X} \circ P_{X|W=w'}) \\&\qquad \leq \eta_{TV}(P_{Y|X},\mathcal P_\mathcal X) R_\alpha(\Gamma_{\max}(Y),\Gamma_{\min}(Y))\\&\qquad \qquad \qquad \quad\times g_\alpha^{-1}\Big(D_{f_\alpha}(P_{X|W=w}||P_{X|W=w'})\Big).
\end{align}
Since $P_{X|W}$ satisfies $(\varepsilon,\alpha)$-RLDP, we can apply \eqref{eq:falphatoalpha} to bound the RLDP guarantee of the entire system as 
\begin{equation}
\label{eq:RLDPbound}
    \varepsilon^* \coloneqq \text{RLDP}^{(\alpha)}(P_{Y|W}) \leq \varphi(\varepsilon_f,\alpha),
    \end{equation}
where $\varepsilon_f \coloneqq \max_{w,w'}D_{f_\alpha}(P_{X|W=w}||P_{X|W=w'})$ is the $f_\alpha$-divergence corresponding to the $\varepsilon$-valued Rényi-divergence, and we define $\varphi: [0,\infty)\times (1,\infty) \to \mathbb R$ as
    \begin{align}
        \varphi&(\varepsilon_f,\alpha) \\& \coloneqq \frac{1}{\alpha-1}\log\biggl(\eta_{TV} R_\alpha(\Gamma_{\max}(Y),\Gamma_{\min}(Y))g_\alpha^{-1}(\varepsilon_f)+1\biggr).
    \end{align}

Further, in this setup, the set of input distributions $\mathcal P$ to the channel $P_{Y|X}$ is just the set of rows of the RLDP mechanism $P_{X|W}$, and each of these distributions is induced by one of the outcomes $W = w$. Hence, the quantities $\Gamma_{\max}(Y)$ and $\Gamma_{\min}(Y)$ reduce to
\begin{equation}
    \Gamma_{\max}(Y) = \max_{w,w' \in \mathcal W, \,y\in \mathcal Y} \frac{\sum_{x\in\mathcal X}P_{Y|X=x}(y)P_{X|W=w}(x)}{\sum_{x\in\mathcal X}P_{Y|X=x}(y)P_{X|W=w'}(x)},
\end{equation}
\begin{equation}
    \Gamma_{\min}(Y) = \min_{w,w' \in \mathcal W, \,y\in \mathcal Y} \frac{\sum_{x\in\mathcal X}P_{Y|X=x}(y)P_{X|W=w}(x)}{\sum_{x\in\mathcal X}P_{Y|X=x}(y)P_{X|W=w'}(x)}.
\end{equation}

The following example illustrates the above results. 
    \begin{figure*}[!t]
        \centering
        \begin{subfigure}{.49\textwidth}
            \centering
            \includegraphics[scale=0.525]{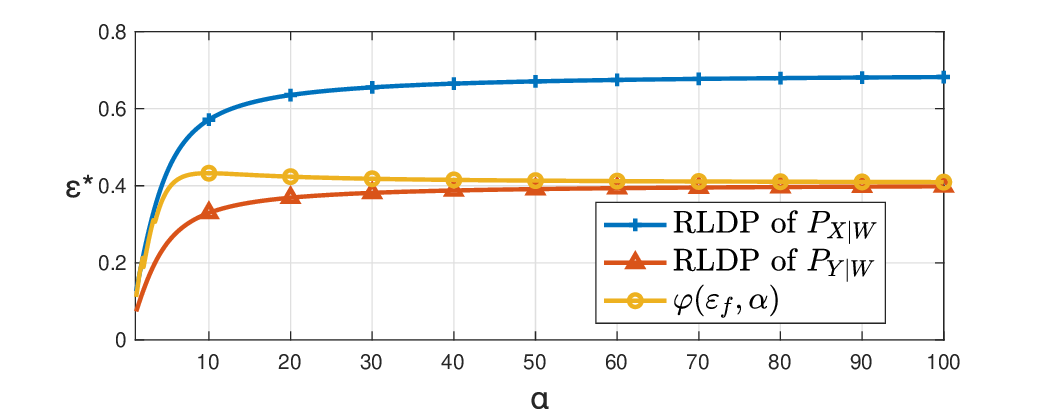}
            \caption{$N=5$, $\varepsilon=\log2$, $k=2$}
        \end{subfigure}
        \begin{subfigure}{.49\textwidth}
            \centering
            \includegraphics[scale=0.525]{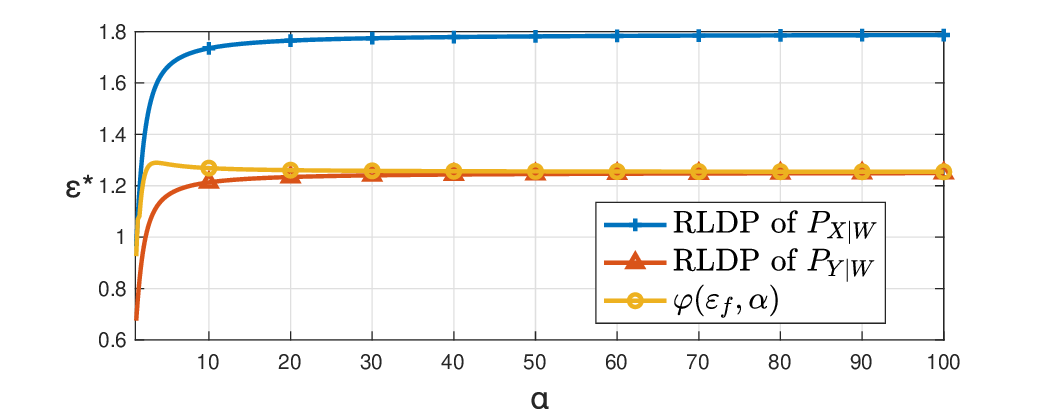}
            \caption{$N=5$, $\varepsilon=\log6$, $k=2$}
        \end{subfigure}
        \begin{subfigure}{.49\textwidth}
            \centering
            \includegraphics[scale=0.525]{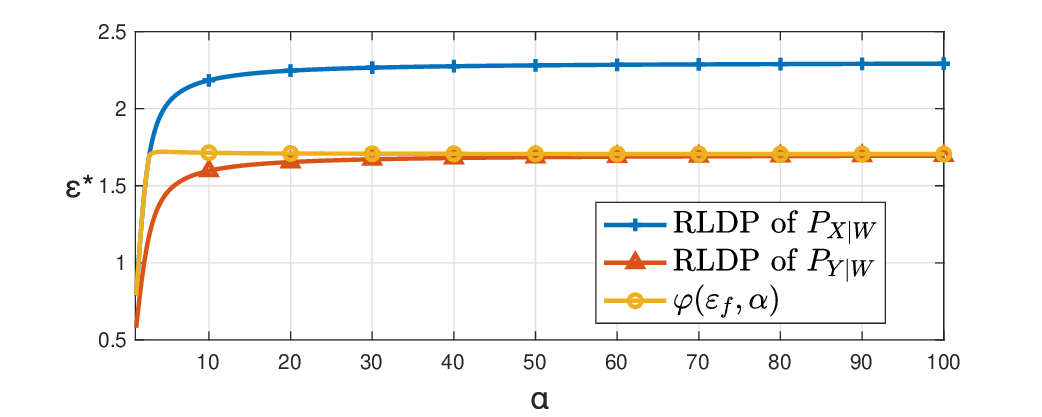}
            \caption{$N=20$, $\varepsilon=\log10$, $k=2$}
        \end{subfigure}
        \begin{subfigure}{.49\textwidth}
            \centering
            \includegraphics[scale=0.525]{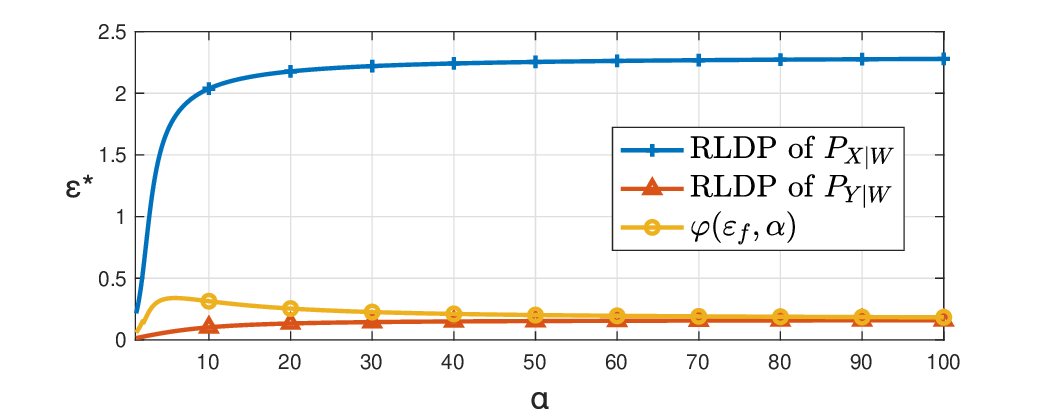}
            \caption{$N=100$, $\varepsilon=\log10$, $k=50$}
        \end{subfigure}
        \caption{RLDP guarantees and the corresponding bounds in Example~\ref{ex:RLDPex}.}
        \label{fig:RLDPex}
    \end{figure*}
\begin{example}
\label{ex:RLDPex}
    Let $P^{(N)}_{X|W}$ be the $N\times N$ randomized response mechanism in \cite{extremalmechanismLong} satisfying $\varepsilon$-LDP, that is, let
    \begin{equation}
        P^{(N)}_{X|W=w}(x) = \begin{cases}
            \frac{e^\varepsilon}{N+e^\varepsilon-1}, &\text{ if }x=w\\
            \frac{1}{N+e^\varepsilon-1}, &\text{ otherwise}.
        \end{cases}
    \end{equation}
For some $m,k > 0$, let $N = mk$. We examine the privacy amplification of the channels $Q^{(m,k)}_{Y|X}$ defined by 
    \begin{equation}
                Q_{Y|X}^{(m,k)} \coloneqq \text{diag}
                (\underbrace{U^{(k)},\dots,U^{(k)}}_{m \text{ times}}),
    \end{equation}
     where $U^{(k)} = \frac{1}{k}\mathbbm 1_k\mathbbm 1_k^T$ is a \say{uniform} block of size $k\times k$.  
    In this setup, the successive application of $P_{X|W}^{(N)}$ and $Q_{Y|X}^{(m,k)}$ to the private data $W$ will increase randomization, and hence it is reasonable to expect an improved privacy guarantee of the composed system compared to $P^{(N)}_{X|W}$ alone.
    Computing the values of $\Gamma_{\max}$ and $\Gamma_{\min}$ in this setup, we find that given the channel $Q^{(m,k)}_{Y|X}$, we have
    \begin{equation}
        \Gamma_{\max}(Y) =\Gamma_{\min}(Y)^{-1}= 1+ \frac{e^\varepsilon-1}{k} = 1 + \frac{m(e^\varepsilon-1)}{N}.
    \end{equation}   
 Interestingly, we observe that for \emph{any} $\varepsilon>0$, and \emph{any} fixed ratio of non-zero elements $\nicefrac{k}{N} = \nicefrac{1}{m}$, we have
 \begin{equation}
      \lim_{N\to \infty}\Gamma_{\max}(Y) = \lim_{N\to \infty}\Gamma_{\min} = 1,
 \end{equation}
 which implies that for any $\alpha >1$, we have
 \begin{equation}
     \varepsilon^* = \text{RLDP}^{(\alpha)}\Big(Q^{(\nicefrac{N}{k},k)}_{Y|X} \circ P_{X|W}^{(N)}\Big) \longrightarrow 0.
 \end{equation}
 That is, post-processing a randomized response mechanism of \emph{any} finite LDP guarantee with a channel constructed in the above way will result in a system with perfect privacy as $N\to \infty$ whenever the ratio of non-zero elements remains fixed as $N$ grows.
We can further numerically determine the bound in \eqref{eq:RLDPbound} assuming the upper bound $\eta_{TV}(P_{Y|X},\mathcal P_{\mathcal X}) = 1$. The resulting RLDP values are shown in Figure \ref{fig:RLDPex} alongside the RLDP guarantee of $P^{(N)}_{X|W}$ (the best bound previously known) and the true RLDP guarantee of the compositions for different choices of $N$ and $k$. As expected, the bound tightens as $\alpha$ grows, since both of the presented Pinsker-type bounds are asymptotically tight in the parameter $\alpha$. 
    \end{example}
Example \ref{ex:RLDPex} shows that even sparse channel matrices can lead to significant privacy amplification if applied as a post-processing step to an LDP system. Moreover, as long as the number of non-zero elements in the channel matrix grows at least linearly with $N$, the privacy guarantee of the system will approach prefect privacy as the alphabet size grows. While these insights are so far limited to randomized response mechanisms, they are interesting for two main reasons: 

First, the results motivate us to further investigate \emph{arbitrary} systems that apply some form of randomization in terms of their effect on LDP guarantees; it has been previously observed that local differential privacy constraints on a system can be severely limiting, especially in large-scale learning applications \cite{10.1145/3433638,10.5555/3454287.3455674}. In this context, the results presented in this paper are encouraging. Specifically, if some form of LDP processing is done to the data, even a learning system without a formal LDP guarantee on its own might be able to significantly increase the privacy guarantee with respect to the data used to train it.

Second, the results are promising for addressing the common issue of large values of $\varepsilon$ in $\varepsilon$-LDP guarantees in practical systems \cite{dwork2019exposeyourepsilons}: Instead of fine tuning the practical application itself to improve its privacy guarantee (which might be intractable), practitioners might be able to apply simple post-processing steps to an application output in order to significantly improve the systems privacy guarantee, while retaining much of the original systems performance.

Future work in this direction should focus on more advanced tools and methods for bounding the values $\Gamma_{\max}$ and $\Gamma_{\min}$ for various (practical) systems. Obtaining such bounds paves the way for easy plug-and-play solutions for improving privacy guarantees of weak LDP systems at a minimal loss of utility.

\normalsize
\appendices
    \section{Proof of Theorem~\ref{thm:pinsker}}
    \label{app:proofpinsker}
      From \cite{7552457} and the fact that $D_{f_{\alpha}}$ is a monotone transform of the Rényi-divergence of the same order, we know that
    \begin{equation}
        \inf_{P,Q\in \mathcal P_X:\, TV(P||Q)=t} D_{f_\alpha}(P||Q)= \min_{p,q: \,|p-q|=t} d_{f_\alpha}(p||q),
    \end{equation}
    where $d_{f_\alpha}(p||q)$ denotes the binary $f_\alpha$-divergence
    \begin{equation}
        d_{f_\alpha}(p||q) \coloneqq \frac{p^\alpha}{q^{\alpha-1}} + \frac{(1-p)^\alpha}{(1-q)^{\alpha-1}} - 1.
    \end{equation}
    Define 
    \begin{equation}
        \psi_{\alpha}(p,t) \coloneqq \frac{p^\alpha}{(p+t)^{\alpha-1}} + \frac{(1-p)^\alpha}{(1-p-t)^{\alpha-1}} - 1.
    \end{equation}
    From the definition of the total variation, we get $\min_{p,q:\,|p-q|=t} d_{f_\alpha}(p||q) = \min_{p\in[0,1-t]} \{\psi_\alpha(p,t)\}$. Taking derivatives with respect to $p$ yields
    \begin{equation}
        \frac{\partial}{\partial p}\psi_\alpha (p,t) = (\alpha t + p)\biggl[\frac{p^{\alpha-1}}{(p+t)^\alpha} + \frac
        {(1-p)^{\alpha-1}}{(1-p-t)^\alpha}\biggr] - \frac
        {(1-p)^{\alpha-1}}{(1-p-t)^\alpha}
    \end{equation}
    and for any $\alpha > 1$ and all $t < 1, \, p \leq 1-t$,
    \begin{equation}
        \frac{\partial^2}{\partial p^2} \psi_\alpha(p,t)= (\alpha-1)\alpha t^2\biggl[\frac{p^{\alpha-2}}{(p+t)^{\alpha+1}}+\frac{(1-p)^{\alpha-2}}{(1-p-t)^{\alpha+1}}\biggr] \geq 0,
    \end{equation}
    from which we can conclude that $\psi_\alpha(p,t)$ is convex in $p$.
        
    First, consider the case $t \geq \nicefrac{1}{\alpha}$. We can bound the first derivative by 
    \begin{equation}
        \frac{\partial}{\partial p}\psi_\alpha(p,t) \geq (\alpha t + p -1)\biggl(\frac{(1-p)^{\alpha-1}}{(1-p-t)^{\alpha}}\biggr) \geq 0,
    \end{equation}
    with equality for $t = \nicefrac{1}{\alpha}$ if $p=0$. Otherwise strict inequality holds. Together with the convexity of $\psi_\alpha$ in $p$, this implies that the maximum value of $\psi_\alpha$ for all $t \geq \nicefrac{1}{\alpha}$ will be achieved at $p = 0$. At this point, we have $\psi_{\alpha}(0,t) = (1-t)^{1-\alpha}$.
    
    Secondly, for $t < \nicefrac{1}{\alpha}$, we first consider $\alpha \geq 2$. Let $P,Q$ be binary probability distributions with $P(0) = p$ and $Q(0)=p+t$ Then, noticing that $x \mapsto x^{\alpha-1}$ is convex for $\alpha \geq 2$, we apply Jensen's inequality to obtain 
    \begin{align*}
        \psi_\alpha(p,t) &= \mathbb E_{X \sim P}\biggl[\biggl(\frac{P(X)}{Q(X)}\biggr)^{\alpha-1}-1\biggr]\\
        &\geq \biggl( \mathbb E_{X \sim P}\biggl[\frac{P(X)}{Q(X)} \biggr] \biggr)^{\alpha-1}-1\\
        &= \biggl(1 + \frac{t^2}{(p+t)(1-p-t)} \biggr)^{\alpha-1}-1\\
        &\geq (1+4t^2)^{\alpha-1}-1. 
    \end{align*}
    Finally, we bound the remaining case with the standard Pinsker's inequality: Since the Rényi-divergence is increasing in $\alpha$ \cite{van2014renyi}, so is $D_{f_\alpha}$. Hence the standard Pinsker's inequlity for the Kullback-Leibler divergence $\text{KL}(P||Q)$ (Rényi-divergence of order $\alpha=1$), $\text{KL}(P||Q) = D_1(P||Q) \geq 2\text{TV}(P||Q)^2$, will continue to hold for $\alpha > 1$. Utilizing \eqref{eq:falphatoalpha}, this yields 
    \begin{equation}
        D_{f_\alpha}(P||Q) \geq \exp\big(2(\alpha-1)\text{TV}(P||Q)^2\big) -1.
    \end{equation} 
    Putting the derived bounds together to form $g_\alpha(t)$ finishes the proof. \qed

\section{Proof of Proposition~\ref{prop:etaisone}}
\label{app:proofetaisone}
For some fixed $P_{Y|X}$, let $P_Y = P_{Y|X}\circ P_X$ and $Q_Y = P_{Y|X}\circ Q_X$. For $\mathcal P = \mathcal P_{\mathcal X}$, we have 
        \begin{align}
        &\eta_\alpha (P_{Y|X},\mathcal P_{\mathcal X}) = \\ &\sup_{P_X,Q_X \in \mathcal P_X: \, D_\alpha (P_X||Q_X) \neq 0} \frac{\log\bigg\{\sum_{y\in \mathcal Y} P_Y(y)^{\alpha}Q_Y(y)^{1-\alpha}\biggr\}}{\log\biggl\{\sum_{x\in\mathcal X}P_X(x)^{\alpha}Q_X(x)^{1-\alpha}\biggr\}}
    \end{align}
    and the statement follows by constructing $P_X$ and $Q_X$ with full support such that in the limiting case to the support set boundaries, we get 

    \begin{equation}
        \frac{\log\biggl\{\sum_{y\in \mathcal Y} P_Y(y)^{\alpha}Q_Y(y)^{1-\alpha}\biggr\}}{\log\biggl\{\sum_{x\in\mathcal X}P_X(x)^{\alpha}Q_X(x)^{1-\alpha}\biggr\}} = 1.
    \end{equation}
    Since the DPI for Rényi-divergences states that $\eta_\alpha(P_{Y|X}) \leq 1$ for all $P_{Y|X}$, this implies $\eta_{\alpha}(P_{Y|X},\mathcal P_{\mathcal X})=1$.

    \textbf{Construction of $P_X$ and $Q_X$:} For ease of notation, let $\mathcal X \coloneqq [N]$. Further, define the indicator function by
    \begin{equation}
        \delta_i(x) = \begin{cases}
            1, &\text{if }x=i\\
            0, &\text{otherwise}.
        \end{cases}
    \end{equation}
    For some very small $\gamma > 0$, let $P_X^\gamma$ be a uniformly perturbed single point mass at $x=i$ for some $i \in [N]$, that is,
    \begin{equation}
        P_X^\gamma(x) = (1-\gamma)\delta_i(x) + \frac{\gamma}{N-1}(1-\delta_i(x)),
    \end{equation}
    and similarly, let
    \begin{equation}
        Q_X^\gamma(x) = \frac{1-\gamma}{2}(\delta_i(x) + \delta_j(x)) + \frac{\gamma}{N-2}(1-\delta_i(x)-\delta_j(x)),
    \end{equation}
    where $j$ is such that any $\mathcal X^{(y)} \coloneqq \{x \in [N]: P_{Y|X=x}(y) > 0\}$ that contains $i$ does not contain $j$. Let $\mathcal S_i \coloneqq \supp(P_{Y|X=i})$. Then we get 
    \begin{align}
        &\lim_{\gamma \to 0} \eta_\alpha(P^\gamma_X,P_{Y|X}) \\&= \frac{\log\biggl\{\sum_{y \in \mathcal S_i}\biggl(P_{Y|X=i}(y)Q^0_X(i)\biggr)^{1-\alpha}\biggl(P_{Y|X=i}(y)\biggr)^\alpha\biggr\}}{\log\biggl\{Q_X^0(i)^{1-\alpha}\biggr\}} \\
        &= \frac{\log\biggl\{Q^0_X(i)^{1-\alpha}\sum_{y\in \mathcal S_i}P_{Y|X=i}(y)\biggr\}}{\log\biggl\{Q^0_X(i)^{1-\alpha}\biggr\}} \\&= \frac{(1-\alpha)\log Q^0_X(i)}{(1-\alpha)\log Q^0_X(i)} = 1,
    \end{align}
    where the second to last equality follows from the row-stochasticity of $P_{Y|X}$.
    By this construction we have $\eta_{\alpha}(P_{Y|X}) \geq 1$. Hence, the data processing inequality for Rényi-divergences implies $\eta_\alpha(P_{Y|X}) = 1$.
    Noticing that choosing $i$ and $j$ in this manner is only possible if there exists some $i,j \in [N]$ such that $\supp(P_{Y|X=i}) \cap \supp(P_{Y|X=j}) = \emptyset$ finishes the proof. \qed

\bibliographystyle{IEEEtranN}
\footnotesize
\bibliography{main}

@ARTICLE{gilani2024unifying,
  author={Gilani, Atefeh and Kurri, Gowtham R. and Kosut, Oliver and Sankar, Lalitha},
  journal={IEEE Transactions on Information Theory}, 
  title={Unifying Privacy Measures via Maximal ($\alpha$, $\beta$)-Leakage ({M}$\alpha$beL)}, 
  year={2024},
  volume={70},
  number={6},
  pages={4368-4395}}

@inproceedings{mironov2017renyi,
  title={R{\'e}nyi differential privacy},
  author={Mironov, Ilya},
  booktitle={IEEE 30th computer security foundations symposium (CSF)},
  pages={263--275},
  year={2017}
}

@article{van2014renyi,
  title={R{\'e}nyi divergence and {Kullback-Leibler} divergence},
  author={van Erven, Tim and Harremoës, Peter},
  journal={IEEE Transactions on Information Theory},
  volume={60},
  number={7},
  pages={3797--3820},
  year={2014},
  publisher={IEEE}
}

@article{extremalmechanismLong,
  title={Extremal mechanisms for local differential privacy},
  author={Kairouz, Peter and Oh, Sewoong and Viswanath, Pramod},
  journal={The Journal of Machine Learning Research},
  volume={17},
  number={1},
  pages={492--542},
  year={2016},
  publisher={JMLR. org}
}

@article{dwork2019exposeyourepsilons,
  title={Differential privacy in practice: Expose your epsilons!},
  author={Dwork, Cynthia and Kohli, Nitin and Mulligan, Deirdre},
  journal={Journal of Privacy and Confidentiality},
  volume={9},
  number={2},
  year={2019}
}

@inproceedings{renyi1961entropy,
  title={On measures of entropy and information},
  author={R{\'e}nyi, Alfr{\'e}d},
  booktitle={Fourth Berkeley Symposium on Mathematical Statistics and Probability, Volume 1: Contributions to the Theory of Statistics},
  volume={4},
  pages={547--562},
  year={1961},
  organization={University of California Press}
}

@article{raginsky2016strong,
  title={Strong data processing inequalities and $\Phi$-Sobolev inequalities for discrete channels},
  author={Raginsky, Maxim},
  journal={IEEE Transactions on Information Theory},
  volume={62},
  number={6},
  pages={3355--3389},
  year={2016},
  publisher={IEEE}
}

@INPROCEEDINGS{10206578,
  author={Zamanlooy, Behnoosh and Asoodeh, Shahab},
  booktitle={2023 IEEE International Symposium on Information Theory (ISIT)}, 
  title={Strong Data Processing Inequalities for Locally Differentially Private Mechanisms}, 
  year={2023},
  volume={},
  number={},
  pages={1794-1799},
  keywords={Upper bound;Estimation;Data processing;Bayes methods;Information theory},
  doi={10.1109/ISIT54713.2023.10206578}}

@article{polyanskiy2015dissipation,
  title={Dissipation of information in channels with input constraints},
  author={Polyanskiy, Yury and Wu, Yihong},
  journal={IEEE Transactions on Information Theory},
  volume={62},
  number={1},
  pages={35--55},
  year={2015},
  publisher={IEEE}
}

@ARTICLE{7552457,
  author={Sason, Igal and Verdú, Sergio},
  journal={IEEE Transactions on Information Theory}, 
  title={$f$-Divergence Inequalities}, 
  year={2016},
  volume={62},
  number={11},
  pages={5973-6006},
  keywords={Entropy;Information theory;Conferences;Joining processes;Q measurement;Systematics;Integral equations;Relative entropy;total variation distance;f-divergence;Rényi divergence;Pinsker’s inequality;relative information},
  doi={10.1109/TIT.2016.2603151}}

@ARTICLE{6870478,
  author={Altuğ, Yücel and Wagner, Aaron B.},
  journal={IEEE Transactions on Information Theory}, 
  title={Refinement of the Random Coding Bound}, 
  year={2014},
  volume={60},
  number={10},
  pages={6005-6023},
  keywords={Encoding;Error probability;Upper bound;Erbium;Decoding;Capacity planning;Random variables;Channel coding;error exponents;random coding bound;sub-exponential factor},
  doi={10.1109/TIT.2014.2345374}}

@ARTICLE{8630660,
  author={Binette, Olivier},
  journal={IEEE Transactions on Information Theory}, 
  title="{A note on reverse Pinsker inequalities}", 
  year={2019},
  volume={65},
  number={7},
  pages={4094-4096},
  keywords={Upper bound;Entropy;Q measurement;Information theory;Quantization (signal);Bayes methods;Kullback-Leibler divergence;reverse Pinsker inequalities;f-divergences;range of values;upper bounds},
  doi={10.1109/TIT.2019.2896192}}

@ARTICLE{5773031,
  author={Harremoës, Peter and Vajda, Igor},
  journal={IEEE Transactions on Information Theory}, 
  title={On Pairs of $f$ -Divergences and Their Joint Range}, 
  year={2011},
  volume={57},
  number={6},
  pages={3230-3235},
  keywords={Joints;Atmospheric measurements;Particle measurements;Convex functions;Loss measurement;Density measurement;Testing;Convexity;$f$-divergence;joint range},
  doi={10.1109/TIT.2011.2137353}}

@ARTICLE{1056798,
  author={Shannon, C.},
  journal={IRE Transactions on Information Theory}, 
  title={The zero error capacity of a noisy channel}, 
  year={1956},
  volume={2},
  number={3},
  pages={8-19},
  keywords={Feedback;Decoding;Block codes;Upper bound;Memoryless systems;Telephony;Laboratories;Capacity planning;Information analysis;Filling},
  doi={10.1109/TIT.1956.1056798}}

@book{Polyanskiy_Wu_2025, place={Cambridge}, title={Information Theory: From Coding to Learning}, publisher={Cambridge University Press}, author={Polyanskiy, Yury and Wu, Yihong}, year={2025}}

@article{10.1134/S0032946020020015, author = {Makur, A. and Zheng, L.}, title = {Comparison of Contraction Coefficients for $f$-Divergences}, journal={Problems of Information Transmission}, year = {2020}, issue_date = {Apr 2020}, publisher = {Plenum Press}, address = {USA}, volume = {56}, number = {2}, issn = {0032-9460}, url = {https://doi.org/10.1134/S0032946020020015}, doi = {10.1134/S0032946020020015}}

@inproceedings{zamanlooy2024mathrm,
  title="{$E_{\gamma}$-mixing time}",
  author={Zamanlooy, Behnoosh and Asoodeh, Shahab and Diaz, Mario and Calmon, Flavio P},
  booktitle={2024 IEEE International Symposium on Information Theory (ISIT)},
  pages={3474--3479},
  year={2024},
  organization={IEEE}
}

@INPROCEEDINGS{9517999,
  author={Asoodeh, Shahab and Aliakbarpour, Maryam and Calmon, Flavio P.},
  booktitle={IEEE Int. Symp. Inf. Theory}, 
  title={Local Differential Privacy Is Equivalent to Contraction of an $f$-Divergence}, 
  year={2021},
  volume={},
  number={},
  pages={545-550},
  keywords={Privacy;Differential privacy;Estimation;Tools;Bayes methods;Standards;Testing},
  doi={10.1109/ISIT45174.2021.9517999}}

@article{sason2019data,
  title={On data-processing and majorization inequalities for $f$-divergences with applications},
  author={Sason, Igal},
  journal={Entropy},
  volume={21},
  number={10},
  pages={1022},
  year={2019},
  publisher={MDPI}
}

@book{cohen1998comparisons,
  title={Comparisons of stochastic matrices with applications in information theory, statistics, economics and population},
  author={Cohen, Joel and Kempermann, Johannes HB and Zbaganu, Gheorghe},
  year={1998},
  publisher={Springer Science \& Business Media}
}

@ARTICLE{720537,
  author={Körner, J. and Orlitsky, A.},
  journal={IEEE Transactions on Information Theory}, 
  title={Zero-error information theory}, 
  year={1998},
  volume={44},
  number={6},
  pages={2207-2229},
  keywords={Information theory;Computer science;Source coding;Computer errors;Combinatorial mathematics;Channel coding;Complexity theory;Data compression;Entropy;Error probability},
  doi={10.1109/18.720537}}

@INPROCEEDINGS{10619367,
  author={Jin, Lifu and Esposito, Amedeo Roberto and Gastpar, Michael},
  booktitle={2024 IEEE International Symposium on Information Theory (ISIT)}, 
  title="{Properties of the strong data processing constant for Rényi divergence}", 
  year={2024},
  volume={},
  number={},
  pages={3178-3183},
  keywords={Upper bound;Atmospheric measurements;Data processing;Particle measurements;Information theory},
  doi={10.1109/ISIT57864.2024.10619367}}

@article{csiszar1967information,
  title={Information-type measures of difference of probability distributions and indirect observation},
  author={Csisz{\'a}r, Imre},
  journal={Studia Scientiarum Mathematicarum Hungarica},
  volume={2},
  pages={299--318},
  year={1967}
}

@book{diestel2017,
    author = {Diestel, Reinhard},
    title = {Graph Theory},
    publisher = {Springer},
    year = {2017},
    edition = {5},
    series =  {Graduate Texts In Mathematics},
    isbn = {978-3-662-53622-3},
    doi = {https://doi.org/10.1007/978-3-662-53622-3}
}

@article{anantharam2013maximal,
  title="{On maximal correlation, hypercontractivity, and the data processing inequality studied by Erkip and Cover}",
  author={Anantharam, Venkat and Gohari, Amin and Kamath, Sudeep and Nair, Chandra},
  journal={arXiv preprint arXiv:1304.6133},
  year={2013}
}

@article{ahlswede1976spreading,
  title={Spreading of sets in product spaces and hypercontraction of the Markov operator},
  author={Ahlswede, Rudolf and G{\'a}cs, Peter},
  journal={The annals of probability},
  pages={925--939},
  year={1976},
  publisher={JSTOR}
}

@article{Hirche:2023caq,
    author = "Hirche, Christoph and Tomamichel, Marco",
    title = "{Quantum R\'enyi and f-Divergences from Integral Representations}",
    eprint = "2306.12343",
    archivePrefix = "arXiv",
    primaryClass = "quant-ph",
    doi = "10.1007/s00220-024-05087-3",
    journal = "Commun. Math. Phys.",
    volume = "405",
    number = "9",
    pages = "208",
    year = "2024"
}

@INPROCEEDINGS{7360766,
  author={Sason, Igal and Verdú, Sergio},
  booktitle={2015 IEEE Information Theory Workshop - Fall (ITW)}, 
  title="{Upper bounds on the relative entropy and Rényi divergence as a function of total variation distance for finite alphabets}", 
  year={2015},
  volume={},
  number={},
  pages={214-218},
  keywords={Entropy;Upper bound;Q measurement;Information theory;Conferences;Eigenvalues and eigenfunctions;Electrical engineering;Pinsker's inequality;relative entropy;relative information;Rényi divergence;total variation distance},
  doi={10.1109/ITWF.2015.7360766}}

@ARTICLE{1201071,
  author={Fedotov, A.A. and Harremoes, P. and Topsoe, F.},
  journal={IEEE Transactions on Information Theory}, 
  title="{Refinements of Pinsker's inequality}", 
  year={2003},
  volume={49},
  number={6},
  pages={1491-1498},
  keywords={Random variables;Councils;Polynomials;Entropy;Information theory;Additives;Information technology;Mathematics;Upper bound},
  doi={10.1109/TIT.2003.811927}}

@article{10.1145/3433638, author = {Domingo-Ferrer, Josep and S\'{a}nchez, David and Blanco-Justicia, Alberto}, title = {The limits of differential privacy (and its misuse in data release and machine learning)}, year = {2021}, issue_date = {July 2021}, publisher = {Association for Computing Machinery}, address = {New York, NY, USA}, volume = {64}, number = {7}, issn = {0001-0782}, url = {https://doi.org/10.1145/3433638}, doi = {10.1145/3433638}, abstract = {Differential privacy is not a silver bullet for all privacy problems.}, journal = {Commun. ACM}, month = jun, pages = {33–35}, numpages = {3} }

@inbook{10.5555/3454287.3455674, author = {Bagdasaryan, Eugene and Poursaeed, Omid and Shmatikov, Vitaly}, title = {Differential privacy has disparate impact on model accuracy}, year = {2019}, publisher = {Curran Associates Inc.}, address = {Red Hook, NY, USA}, abstract = {Differential privacy (DP) is a popular mechanism for training machine learning models with bounded leakage about the presence of specific points in the training data. The cost of differential privacy is a reduction in the model's accuracy. We demonstrate that in the neural networks trained using differentially private stochastic gradient descent (DP-SGD), this cost is not borne equally: accuracy of DP models drops much more for the underrepresented classes and subgroups. For example, a gender classification model trained using DP-SGD exhibits much lower accuracy for black faces than for white faces. Critically, this gap is bigger in the DP model than in the non-DP model, i.e., if the original model is unfair, the unfairness becomes worse once DP is applied. We demonstrate this effect for a variety of tasks and models, including sentiment analysis of text and image classification. We then explain why DP training mechanisms such as gradient clipping and noise addition have disproportionate effect on the underrepresented and more complex subgroups, resulting in a disparate reduction of model accuracy.}, booktitle = {Proceedings of the 33rd International Conference on Neural Information Processing Systems}, articleno = {1387}, numpages = {10} }

@article{balle2018privacy,
  title={Privacy amplification by subsampling: Tight analyses via couplings and divergences},
  author={Balle, Borja and Barthe, Gilles and Gaboardi, Marco},
  journal={Adv. Neural Inf. Process.},
  volume={31},
  year={2018}
}

@article{balle2019privacy,
  title={Privacy amplification by mixing and diffusion mechanisms},
  author={Balle, Borja and Barthe, Gilles and Gaboardi, Marco and Geumlek, Joseph},
  journal={Adv. Neural Inf. Process.},
  volume={32},
  year={2019}
}

@article{balle2020privacy,
  title={Privacy profiles and amplification by subsampling},
  author={Balle, Borja and Barthe, Gilles and Gaboardi, Marco},
  journal={J. Priv. Confidentiality},
  volume={10},
  number={1},
  year={2020}
}

\end{document}